\newtheorem{theorem}{Theorem}[section]
\newtheorem{corollary}{Corollary}[section]
\def\x{{\mathbf x}}
\title{MULTIPLE ILLUMINATION PHASELESS SUPER-RESOLUTION (MIPS) WITH APPLICATIONS TO PHASELESS DOA ESTIMATION AND DIFFRACTION IMAGING}
\name{Fariborz Salehi, Kishore Jaganathan and Babak Hassibi}
\address{Department of Electrical Engineering, California Institute of Technology, Pasadena.}
\begin{document}
%
\maketitle
\begin{abstract}
Phaseless super-resolution is the problem of recovering an unknown signal from measurements of the ``magnitudes" of the ``low frequency" Fourier transform of the signal. This problem arises in applications where measuring the phase, and making high-frequency measurements, are either too costly or altogether infeasible. The problem is especially challenging because it combines the difficult problems of phase retrieval and classical super-resolution. Recently, the authors in \cite{K4} demonstrated that by making three phaseless low-frequency measurements, obtained by appropriately ``masking" the signal, one can uniquely and robustly identify the phase using convex programming and obtain the same super-resolution performance reported in \cite{Candes1}. However, the masks proposed in \cite{K4} are very specific and in many applications cannot be directly implemented. In this paper, we broadly extend the class of masks that can be used to recover the phase and show how their effect can be emulated in coherent diffraction imaging using multiple illuminations, as well as in direction-of-arrival (DoA) estimation using multiple sources to excite the environment. We provide numerical simulations to demonstrate the efficacy of the method and approach.
\end{abstract}
\begin{keywords}
Super-resolution, phase-retrieval, direction-of-arrival, diffraction imaging, semidefinite relaxation.
\end{keywords}
\section{Introduction}
\label{sec:intro}

It is often difficult to obtain high-frequency measurements in sensing systems due to physical limitations on the highest possible resolution a system can achieve. As an example, the fundamental resolution limit in optical systems caused by diffraction is an obstacle to observe sub-wavelength structures. {\textit{Super-resolution}} is the problem of recovering the high-frequency features of the signal using low-frequency Fourier measurements. In addition, many measurement systems can only measure the magnitude of the Fourier transform of the underlying signal. The fundamental problem of recovering a signal from the magnitude of its Fourier transform is known as {\textit{phase retrieval}}. 
Both of the aforementioned reconstruction problems have rich history and occur in many areas in engineering and applied physics such as  astronomical imaging \cite{Astro1,Astro2}, X-ray crystallography \cite{Crystal}, medical imaging \cite{Medical1,Medical2,Medical3}, and optics \cite{Optic}. A wide variety of techniques have been proposed for super-resolution \cite{SR1,SR2,Candes1,Recht} and phase retrieval \cite{Fienup,K2,Eldar1} problems.

Here we consider the {\textit{phaseless super-resolution}} problem, which is the problem of reconstructing a signal using its low-frequency Fourier magnitude measurements. Our work is inspired by \cite{K4} where it was shown that using three phaseless low frequency measurements, obtained by appropriately ``masking" the signal, one can uniquely and robustly identify the phase using convex programming and obtain the same super-resolution performance reported in \cite{Candes1}. While this is a significant result, due to physical limitations in measuring systems, it is not always possible to generate the mask matrices required in \cite{K4}. The main contribution of this paper is to broadly extend the class of masks that can be used to recover the phase using convex programming. In addition, we show how these masks can be implemented in {\textit{coherent diffraction imaging}}, using multiple illuminations, and {\textit{direction of arrival estimation}}, using multiple sources to excite the environment.

The organization of the paper is as follows. In Section 2, we mathematically set up the reconstruction problem and present our main result. In Section 3, we describe the practical significance of our result. Section 4 contains the details of the proof. The results of the various numerical simulations are presented in Section 5.

\section{Main Result}
\label{sec:problem definition}

Let $\mathrm x=(x[0],x[1],\ldots,x[N-1])$ be a complex-valued signal of length $N$ and sparsity $k$. Suppose we have a device that can only measure the magnitude-squares of the $2K+1$ low frequency terms of the $N$ point DFT of $\mathrm x$ (one DC term and $K$ lowest frequencies on either side of it). Clearly, this is not sufficient to generally recover $\mathrm x$. The idea of masked phaseless measurements is to obtain additional information by first masking the signal and then measuring the magnitude-squares of the $2K+1$ low frequency terms of its $N$ point DFT. Mathematically, masking a signal is equivalent to multiplying it by a diagonal ``mask" matrix, say $D$ \cite{Candes2, K1}.

Indeed, more than one mask is necessary if one wishes to recover general signals from such measurements. Assuming we have $R$  masks, for $0\leq r\leq R-1$, we will depict them by $D_r = \mbox{diag}(d_r[0],d_r[1],\ldots, d_r[N-1])$. The problem we are interested in is recovering $\mathrm x$ from the resulting collection of low frequency masked phaseless measurements, viz., 
\begin{equation}
\begin{aligned}
&{\text{find}}&&x\\
&{\text{subject to}}&&Z[m,r]=|\langle f_m,D_rx\rangle|^2\;\\
&\;\;\;\text{for}\;\; -K\leq m\leq K&&{\text{and}}\;\; \;0\leq r \leq R-1,
\end{aligned}
\end{equation}
where $\langle.,.\rangle$ is the standard inner product operator, $f_m$ is the conjugate of the $m$th column of the $N$ point DFT matrix and $Z[m,r]$ denotes the magnitude-square of the $m$th term of the $N$ point DFT for the $r$th mask. The index $m$ is to be understood modulo $N$, due to the nature of the $N$ point DFT.

Of course there are two issues that arise with the above problem: (1) designing a set of masks for which one can (up to a global phase) uniquely, efficiently and stably identify the signal and (2) developing an algorithm that can provably do so. Both these issues were resolved in \cite{K4} where it is shown that, under appropriate conditions,  the following three masks
\begin{equation}
D_0 = I ~~~,~~~D_1 = I+D^{(1)}~~~,~~~D_2 = I-iD^{(1)},
\end{equation}
where the diagonal entries of $D^{(1)}$ are given by
\begin{equation*}
d^{(1)}[n]=e^{i2\pi \frac{n}{N}},~~~n=0,1,\ldots,N-1,
\end{equation*}
are sufficient to uniquely identify $\mathrm X=\mathrm x\mathrm x^\star$ using the convex program 
\begin{equation}
\label{main}
\begin{aligned}
&\underset{X\in \mathbb{S}^n}{\text{minimize}}&&\|X\|_1\\
&{\text{subject to}}&&Z[m,r]={\text{trace}} (D_r^{\star}f_mf_m^{\star}D_rX)\\
&\;\;\;\text{for}\;\; -K\leq m\leq K&&{\text{and}} \;\;\;0\leq r \leq R-1\\
&&&X\succeq 0.
\end{aligned}
\end{equation}
The above convex program is obtained by the standard method of linearizing a quadratic-constrained problem by {\textit{lifting}} \cite{LIFTING1,LIFTING2, K6, CANDESPL, BALAN, SAMET, ROMBERG} the problem to the rank-one matrix $\mathrm X=\mathrm x\mathrm x^\star$ and afterwards convexifying it by relaxing the rank one constraint to a non-negativity constraint. Since the matrix we want to recover is sparse, the $l_1$-norm is used as the objective function.

\subsection{Contribution}
\label{subsec:contribution}

While the result of \cite{K4} is very nice, in many applications, the masking matrix $D^{(1)}$ is difficult to implement. Therefore, it is desirable to have more flexibility in the mask designs so as to permit more applications. We herein propose a set of 5 flexible masks. The building blocks of these masks are the diagonal matrices denoted by $D^{(l)}$, for 
$0\leq l\leq N-1$,  where the diagonal entries are
\begin{equation*}
d^{(l)}[n]=e^{i2\pi \frac{ln}{N}},~~~n=0,1,\ldots,N-1.
\end{equation*}
We are now in a position to state our main result.
\begin{theorem}
The convex program (\ref{main}) has a unique optimizer, namely $\mathrm X=\mathrm x \mathrm x^{\star}$, and thus $\mathrm x$ can be uniquely identified (up to a global phase), if 
\begin{enumerate}
\item $\Delta = \underset{0\leq i,j\leq k-1, i\neq j}\min(t_i-t_j)~\textrm{mod}~N\geq \frac{CN}{K}$, where $t_i$ for $0 \leq i \leq k-1$ are the positions of the non-zero entries of $\mathrm x$, and $C$ is a numerical constant.
\item $y[-K],\ldots,y[0],\ldots,y[K] \neq 0$, where $y$ is the $N$ point DFT of $\mathrm x$.
\item The following mask matrices are used:
\begin{equation}
\label{Define Masks}
\begin{aligned}
&&&D_0=D^{(0)}=I,\;D_1=I+D^{(l_1)},\;D_2=I-iD^{(l_1)}\\
&&&\;\;\;\;\;\;D_3=I+D^{(l_2)},\;\;\;\; D_4=I-iD^{(l_2)}.
\end{aligned}
\end{equation} 
\item $l_1$ and $l_2$ are integers that satisfy
\begin{equation}
gcd(l_1,l_2)=1,\;\;\; |l_1|+|l_2|\leq 2K.
\end{equation}
\end{enumerate}
\label{thm:main}
\end{theorem}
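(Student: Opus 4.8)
The plan is to reduce the problem to the phase-retrieval-then-super-resolution pipeline of \cite{K4}, but with the weaker mask hypotheses of the theorem. First I would show that the five phaseless measurement vectors $Z[\cdot,0],\dots,Z[\cdot,4]$ determine, in a stable way, the cross-correlation-type quantities $y[m]\overline{y[m-l]}$ for $l\in\{0,\pm l_1,\pm l_2\}$ within the low-frequency band. Indeed, expanding $|\langle f_m,D_r x\rangle|^2$ for $D_r=I+\alpha D^{(l)}$ gives, modulo $N$,
\begin{equation}
|\langle f_m, (I+\alpha D^{(l)})x\rangle|^2 = |y[m]|^2 + |\alpha|^2|y[m-l]|^2 + 2\,\mathrm{Re}\!\left(\bar\alpha\, y[m]\,\overline{y[m-l]}\right),
\end{equation}
so the $r=0$ mask supplies $|y[m]|^2$, and the two masks with coefficients $\alpha=1$ and $\alpha=-i$ for a given shift $l$ jointly supply the real and imaginary parts of $y[m]\overline{y[m-l]}$. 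Hence from the masks built on $l_1$ we recover $y[m]\overline{y[m-l_1]}$ and from those built on $l_2$ we recover $y[m]\overline{y[m-l_2]}$, for all $m$ in the observed band (with the appropriate index shifts, keeping an eye on which indices fall inside $[-K,K]$).

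The second step is the phase-propagation argument. Writing $y[m]=|y[m]|e^{i\theta_m}$, the recovered products fix the phase differences $\theta_m-\theta_{m-l_1}$ and $\theta_m-\theta_{m-l_2}$ whenever the magnitudes are nonzero, which is guaranteed on the central band by hypothesis~(2). Because $\gcd(l_1,l_2)=1$, the integer combinations $a l_1 + b l_2$ generate all of $\mathbb{Z}$, so starting from $\theta_0$ (the global phase we are allowed to lose) one can walk via steps of size $l_1$ and $l_2$ to every index $m$ with $|m|\le K$; the constraint $|l_1|+|l_2|\le 2K$ is exactly what ensures the walk can be carried out without ever stepping outside the band $[-K,K]$ on which the magnitudes are known to be nonzero. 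This reconstructs $y[-K],\dots,y[K]$ up to a single global phase, i.e. it reconstructs the low-frequency Fourier data of $\mathrm x$ itself (not merely of $\mathrm x\mathrm x^\star$).

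The third step invokes the classical sparse super-resolution guarantee: given the $2K+1$ consecutive low-frequency Fourier coefficients of a $k$-sparse signal whose support has minimum separation $\Delta\ge CN/K$, the $\ell_1$-minimization (equivalently the dual-certificate construction of \cite{Candes1}) recovers $\mathrm x$ exactly, and this is precisely hypothesis~(1). Finally I would lift this back to the matrix variable: exact recovery of $\mathrm x$ (up to global phase) means $\mathrm X=\mathrm x\mathrm x^\star$ is the unique feasible point consistent with all the measurements that also minimizes $\|X\|_1$ over the PSD cone, which is the assertion about the convex program~(\ref{main}). The step I expect to be the genuine obstacle is making the second step \emph{quantitatively robust} rather than merely an exact-arithmetic phase-chaining argument: one must control how measurement noise and the lower bounds on $|y[m]|$ propagate through the $\gcd$-walk (whose length can be of order $K$), and then feed the resulting perturbed Fourier data into the stability version of the super-resolution result, so that the error bounds compose without blowing up. Verifying that a valid dual certificate for the lifted, $\ell_1$-regularized, PSD-constrained program exists — as opposed to just for the plain super-resolution program — is the technical heart, and I would model that construction closely on the certificate in \cite{K4}, checking that it survives the replacement of the specific shift $l=1$ by the coprime pair $(l_1,l_2)$.
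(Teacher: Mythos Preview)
Your proposal contains the right raw ingredients --- expanding the masked magnitudes to obtain the cross-products $y[m]\overline{y[m-l]}$, and using coprimality of $l_1,l_2$ together with $|l_1|+|l_2|\le 2K$ to link all frequencies in the band --- but what it actually establishes is \emph{identifiability} of $\mathrm x$ from the data, not uniqueness of the optimizer of the convex program~(\ref{main}). Your phase-chaining walk recovers the \emph{vector} $y_K$ (up to a global phase) from the \emph{numbers} $Z[m,r]$; it says nothing about what a general feasible \emph{matrix} $X\succeq 0$ in~(\ref{main}) can look like. A higher-rank feasible $X$ is not excluded by the walk, and your closing sentence ``exact recovery of $\mathrm x$\dots means $\mathrm X=\mathrm x\mathrm x^\star$ is the unique feasible point\dots that also minimizes $\|X\|_1$'' is a non sequitur: information-theoretic uniqueness of $\mathrm x$ does not transfer automatically to optimality in the lifted SDP. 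The robustness discussion is also beside the point here, since the theorem is a noiseless exact-recovery statement.

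The paper closes both gaps by working at the matrix level throughout. Writing $Y=F_KXF_K^\star$ for a feasible $X$, the affine constraints of~(\ref{main}) fix $Y[m,m]$, $Y[m,m+l_1]$, and $Y[m,m+l_2]$; your walk is precisely the argument that the associated graph on $\{-K,\dots,K\}$ is connected. The paper then invokes a PSD matrix-completion lemma (Corollary~\ref{4.2}, proved via an explicit dual certificate): connectedness, the constraint $X\succeq 0$, and Condition~2 force $Y=y_Ky_K^\star$ for \emph{every} feasible $X$. This is the step where the semidefinite constraint does real work, and it is absent from your outline. Finally, since $y_Ky_K^\star$ is the $(2K+1)\times(2K+1)$ block of low \emph{two-dimensional} Fourier coefficients of $X$, the paper applies the two-dimensional super-resolution theorem of~\cite{Candes1} (not the one-dimensional result you invoke) to conclude that $\mathrm x\mathrm x^\star$ is the unique $\ell_1$ minimizer among matrices with those low-frequency data; the PSD constraint only shrinks the feasible set, so the conclusion carries over to~(\ref{main}).
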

As we shall presently see, the masks used in the theorem are easy to implement in both DoA Estimation and Coherent Diffraction Imaging setups.

\section{Applications}
\label{sec:methodology}

\subsection{Phaseless Direction of Arrival Estimation}
\label{sssec:Direction of Arrival Estimation}

Consider the planar direction of arrival estimation setup described in Fig.~{\ref{fig:DoA Estimation}}. Suppose there are $M$ objects which can reflect waves, with the $m$th object, for $0 \leq m \leq M-1$, located at distance $r_m$ and angle $\theta_m$ from the origin. A transmitter positioned at location $-\frac{l\lambda}{2}$ on the x-axis, where $\lambda$ is the transmission wavelength, is used to transmit narrow-band waves with center frequency $f_c=\frac{c}{\lambda}$, and a uniform linear array (ULA) consisting of $2K+1$ receivers located along the $x$-axis at $(-\frac{K\lambda}{2},\ldots,0,\frac{\lambda}{2},\ldots,\frac{K\lambda}{2})$ is used for signal detection.  The direction of arrival estimation problem deals with estimating $\theta_m$, for $0 \leq m \leq M-1$, from the received signal. 
\begin{figure}[h]
\includegraphics[width=8.5cm]{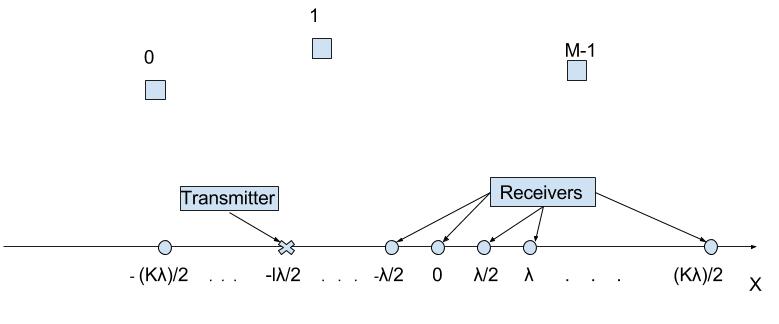}
  \caption{{Direction of arrival estimation using a uniform linear array.}}
  \label{fig:DoA Estimation}
\end{figure}


If $y$ denotes the narrow-band vector impinging on the receivers in the frequency domain, then we can write:
\begin{equation}
\begin{aligned}
&y[k]&&\propto \sum_{m=0}^{M-1}(\rho_me^{\frac{-i2\omega_cr_m}{c}})e^{i\pi (k-l) sin\theta_m},
\end{aligned}
\end {equation}
where $\rho_m$ is the reflectivity of object $m$ and $\omega_c = 2 \pi f_c$ \cite{DirectionArrival}. We refer the reader to section 6.1 of \cite{K5} to follow details of this formulation. If $l=0$, then the vector $y$ represents the $2K+1$ low-frequency terms of the Fourier series of a signal having amplitudes $\rho_me^{\frac{-i2\omega_cr_m}{c}}$ at locations $\frac{sin\theta_m}{2}$. Hence, direction of arrival estimation involves solving the classic super-resolution problem. 

Observe that, for a general $l$, the vector $y$ represents the $2K+1$ low-frequency measurements of the same signal which is masked by the matrix $D^{(l)}$. Theorem \ref{thm:main}, coupled with this critical observation, enables phaseless direction of arrival estimation: 

The mask $D_0$ in Theorem \ref{thm:main} can be implemented by putting an in-phase transmitter at the origin, $D_1$ and $D_3$ by using additional in-phase transmitters at $-\frac{l_1\lambda}{2}$ and $-\frac{l_2\lambda}{2}$, respectively, and $D_2$ and $D_4$ by using additional transmitters that have $\pi/2$ phase difference at those very locations. As a result, if $5$ strategically placed transmitters are used for transmission, then there is no need to measure phase during reception and the angles can be provably recovered by solving (\ref{main}). This is particularly useful in scenarios where measuring phase reliably is either impractical or too costly.  

{\bf Remark}: This idea also extends to the co-prime array and nested array setups described in \cite{PPV1} and \cite{PPV2}, respectively.

\subsection{Coherent Diffraction Imaging (CDI)}
\label{sssec:Diffraction Imaging}

Consider the planar CDI setup described in Fig. {\ref{fig2}}. Let the object and the detector be perpendicular to the $x$-axis, located at $x=0$ and $x=d$ respectively, and $\psi(z)$ denote the one-dimensional object which we wish to determine. The object is illuminated using a coherent source incident at an angle $\theta$ with respect to the $x$-axis. 
\begin{figure}[h]
\includegraphics[width=7cm]{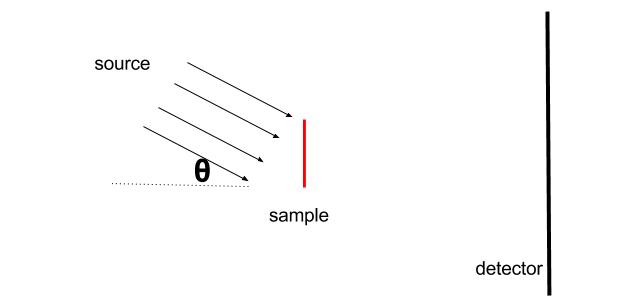}
  \caption{A typical Coherent Diffraction Imaging setup.}
  \label{fig2}
\end{figure}

Detection devices cannot measure the phase of the incoming light waves (the frequency is too high), and instead measure the photon flux. The flux measurements at position $z'$ on the detector, denoted by $I(z')$, are well approximated by:
\begin{equation}
I(z')\propto \left| \int_z \psi(z)e^{i\frac{2\pi z}{\lambda}(-\frac{z'}{d}+\theta)}dz\right|^2.
\label{eq:cdi}
\end{equation} 
If $\theta=0$, then the measurements provide the knowledge of the Fourier magnitude-square of $\psi(z)$. Section 6.2 in \cite{K5} presents details of the above formulation. Therefore, diffraction imaging involves solving the phase retrieval problem. Quite often, the approximation (\ref{eq:cdi}) only applies to positions closer to $z=0$. As a result, one needs to solve phaseless super-resolution in order to recover the underlying object. 

If $\theta=\frac{l}{d}$, then the measurements correspond to the Fourier magnitude-square of $\psi(z)$ masked by the matrix $D^{(l)}$. The equations are identical to those in the direction of arrival setup. Hence, by using $5$ strategic illuminations (using sources placed at $\theta = 0,\frac{l_1}{d},\frac{l_2}{d}$), one can provably recover the object from the low-frequency Fourier magnitude measurements by solving (\ref{main}).




\section{Proof of Theorem 2.1}
\label{sec:proof}

Let $F$ denote the $N$-point DFT matrix and $F_K$ be the $(2K+1)\times N$ submatrix of $F$, consisting of the rows $-K \leq m \leq K$ (understood modulo $N$). Also, let $y_K=F_K\mathrm x$ denote the $2K+1$ low frequency terms in the $N$-point DFT of $\mathrm x$. The proof involves two key steps: (1) the matrix $y_Ky_K^\star$ is uniquely determined by the set of constraints in (\ref{main}) and (2) given $y_Ky_K^\star$, the matrix $\mathrm x\mathrm x^\star$ can be uniquely reconstructed by minimizing $\|X\|_1$ under certain conditions.

We now provide the details for the first step. Consider the following affine transformation $Y = F_K\mathrm XF_K^\star$. When measurements are obtained using the masks proposed in Condition 3, the affine constraints of (\ref{main}) can be rewritten in terms of the variable $Y$ as: 

\begin{align}
Y[m,m] &=\abs{y_K[m]}^2,\;\;\;\;\;\; \text{for} -K\leq m\leq K \\
Y[m,m+ l_1] &=y_K[m]y_K^\star[m+ l_1], \;\text{for} -K\leq m\leq K-l_1\nonumber \\
Y[m,m+ l_2] &=y_K[m]y_K^\star[m+ l_2], \;\text{for} -K\leq m\leq K-l_2.\nonumber
\end{align}

For the sake of brevity, we omit the details here. We refer the interested readers to the proof of Theorem 3.1 in \cite{K4}. As a result, the set of constraints in (\ref{main}) can be viewed as a matrix completion problem in $Y$. Define a graph $G=(V,E)$ on the vertices $V=\{-K,-K+1,\ldots,K-1, K\}$ such that $(m,m-l_1),(m,m-l_2)\in E$ for $-K \leq m \leq K$. In other words, the graph $G$ contains an edge between vertices $i$ and $j$ if the $(i,j)$th entry of $Y$ is fixed by the measurements. Since $l_1$ and $l_2$ are co-prime (Condition 4), the graph $G$ is connected. Additionally, every vertex has an edge with itself (i.e., all the diagonal entries are fixed by the measurements). By using Corollary \ref{4.2}, we conclude that the matrix $Y=y_Ky_K^\star$ is the only feasible matrix (subject to Condition 2).

The second step is a direct consequence of the two-dimensional super-resolution theorem in \cite{Candes1} (subject to Condition 1, also known as the minimum separation condition) due to the fact that $Y$ corresponds to the $2K+1$ two-dimensional low frequencies of the two-dimensional signal $X$. 








\begin{corollary}
\label{4.2}

Suppose $G=(V,E)$ is an undirected graph on $V=\{v_0,v_1,\ldots,v_{n-1}\}$. For $e=(v_i,v_j)\in E$, define $A_e\in \mathbb{C}^{n\times n}$ as the matrix with all entries zero except for $A[i,j]$, which is equal to $1$. Also, for $i=0,1,\ldots, n-1$, define the matrix $A_i\in \mathbb{C}^{n\times n}$ as the matrix that is zero everywhere except for $A[i,i]$, which is equal to $1$. Suppose $z\in \mathbb C^n$ is a vector with non-zero entries. The matrix $Z=zz^{\star}$ is the unique solution of
\begin{equation}
\label{eq}
\begin{aligned}
&\underset{X\in \mathbb{S}^n}{\text{find}}&&X\\
&{\text{subject to}}&&{\text{trace}}(A_iX)=|z[i]|^2,\;\text{for}\;\; i=0,1,\ldots,n-1\\
&&&{\text{trace}}(A_eX)=\bar z[j]z[i],\;\text{for}\;\; e=(v_i,v_j)\in E\\
&&&X\succeq 0
\end{aligned}
\end{equation}
if and only if $G$ is connected. 
\end{corollary}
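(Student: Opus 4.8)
The plan is to prove the two directions of the ``if and only if'' separately, with the forward (soundness) direction being the routine one and the reverse (``connected $\Rightarrow$ uniqueness'') direction carrying the real content.

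For the ``only if'' direction, I would argue the contrapositive: if $G$ is disconnected, I would exhibit a second feasible matrix distinct from $Z=zz^\star$. Suppose $V$ splits into two nonempty parts $V_1\sqcup V_2$ with no edges between them. Then the measurement constraints never couple a coordinate in $V_1$ to a coordinate in $V_2$. Writing $z$ in block form as $(z_1,z_2)$, the matrix $Z=zz^\star$ has blocks $z_1z_1^\star$, $z_2z_2^\star$ on the diagonal and $z_1z_2^\star$, $z_2z_1^\star$ off-diagonal; the off-diagonal blocks are entirely unconstrained. Replacing $z$ by $(e^{i\phi}z_1, z_2)$ for any phase $\phi$ yields a positive semidefinite rank-one matrix that satisfies exactly the same constraints (the diagonal blocks and the within-$V_1$, within-$V_2$ edge constraints are phase-invariant), yet differs from $Z$ whenever both $z_1$ and $z_2$ are nonzero, which holds since $z$ has no zero entries. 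This shows the solution is not unique, so connectedness is necessary.

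For the ``if'' direction, assume $G$ is connected and let $X$ be any feasible matrix. Since $X\succeq 0$ and $X[i,i]=|z[i]|^2>0$ for all $i$, every diagonal entry is strictly positive; I would then use the standard $2\times 2$ minor argument: for each edge $e=(v_i,v_j)\in E$, the principal submatrix indexed by $\{i,j\}$ is PSD with determinant $X[i,i]X[j,j]-|X[i,j]|^2 \geq 0$, and since $X[i,j]=\bar z[j]z[i]$ is fixed with $|X[i,j]|^2 = |z[i]|^2|z[j]|^2 = X[i,i]X[j,j]$, this $2\times 2$ minor is exactly zero. A zero-determinant PSD $2\times 2$ block forces the two corresponding rows of $X$ to be proportional; concretely, row $j$ equals $\frac{\bar z[j]}{\bar z[i]}$ times row $i$ (using $z[i]\neq 0$). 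Propagating this proportionality along paths in the connected graph $G$, starting from any fixed vertex $v_0$, I get that every row of $X$ is a scalar multiple of row $0$, with the scalar on row $i$ forced to be $\bar z[i]/\bar z[0]$ by composing the edge relations along a path from $v_0$ to $v_i$ (consistency of the scalar across different paths is automatic because each individual edge relation is pinned down by the data, not chosen). Hence $X$ has rank one, say $X = ww^\star$ for some $w$, and then $|w[i]|^2 = X[i,i] = |z[i]|^2$ together with the off-diagonal data $w[i]\bar w[j] = z[i]\bar z[j]$ forces $w = e^{i\alpha}z$ for a single global phase $\alpha$, so $X = zz^\star = Z$.

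The main obstacle I anticipate is making the path-propagation step fully rigorous: one must check that the rank-one conclusion follows cleanly from the collection of pairwise proportionalities and that the forced scalars are globally consistent around cycles. The cleanest way to handle this is to first establish, purely from the $2\times 2$ minors being zero, that $\mathrm{rank}(X)\le$ (number of connected components) — equivalently, that within each component all rows are mutually proportional — and only afterward plug in the specific values $\bar z[j]z[i]$ to pin the proportionality constants and collapse the global phase; separating the ``rank drop'' from the ``phase identification'' keeps the cycle-consistency bookkeeping from becoming delicate. A secondary point worth stating carefully is why $X[i,i]>0$: it is needed both to divide by $\bar z[i]$ and to guarantee the comparison matrix in the $2\times 2$ minor is genuinely rank one rather than zero, and it follows immediately from the hypothesis that $z$ has no zero entries.
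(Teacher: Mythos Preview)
Your argument is correct, and it differs substantively from the paper's. For the ``if'' direction the paper builds an explicit dual certificate
\[
W=\sum_{(v_i,v_j)\in E}\bigl(\bar z[j]e_i-\bar z[i]e_j\bigr)\bigl(\bar z[j]e_i-\bar z[i]e_j\bigr)^{\star},
\]
verifies $W\succeq 0$, $\operatorname{trace}(WZ)=0$, and $\operatorname{rank}(W)=n-1$ (the last using connectedness), and then invokes complementary slackness together with a tangent-space decomposition $H=H_{T_z}+H_{T_z^{\perp}}$ to rule out any other feasible $X$. Your route is purely primal: the equality case of Cauchy--Schwarz on each edge's $2\times 2$ principal minor forces the corresponding Gram vectors---hence the corresponding rows of $X$---to be proportional, and connectedness propagates this to rank one. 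Your approach is more elementary and self-contained (no duality, no tangent-space bookkeeping); the paper's dual-certificate approach is the standard template in the SDP/compressed-sensing literature and is what one would reach for to prove stability under noise. As a bonus, you actually supply the ``only if'' direction (disconnected $\Rightarrow$ non-unique via a relative phase between components), which the paper's appendix does not address explicitly.
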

\begin{proof}
The proof of this corollary is based on the method of dual certificates. The details are omitted due to space constraints, and will be provided in the Appendix. 
\end{proof}

\section{Numerical Results}
\label{sec: numerical results}

In this section, the performance of \eqref{main} is demonstrated through numerical simulations. 

\begin{figure}[h]
\begin{center}
\includegraphics[width=8cm]{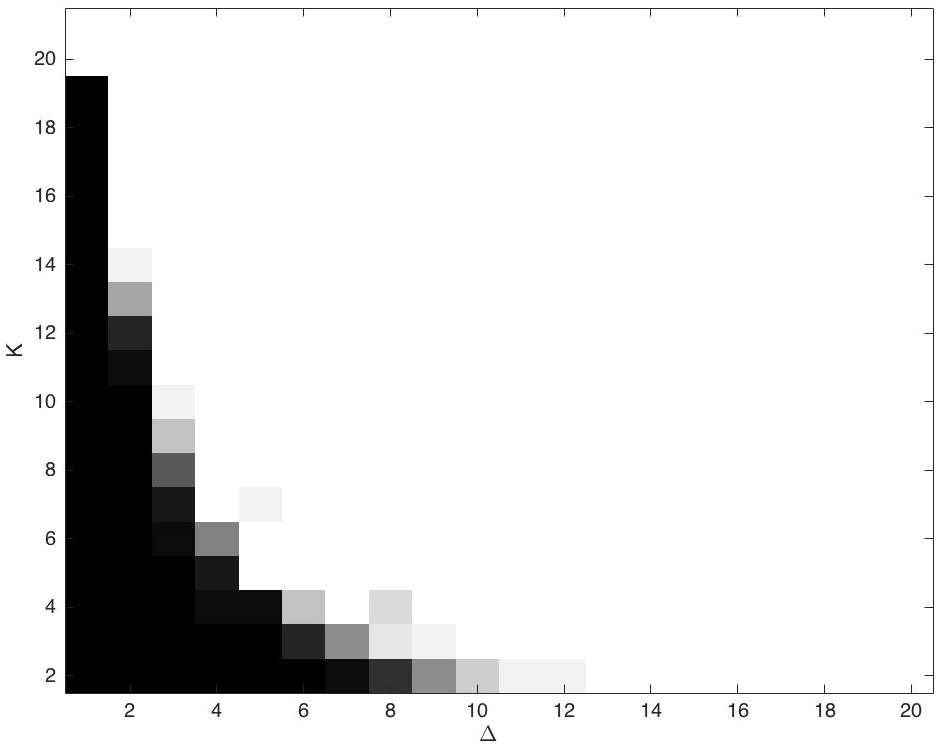}
\end{center}
  \caption{Probability of successful reconstruction for $N=20, l_1=2,l_2=3$ and various choices of $K$ and $\Delta$, using the masks defined in {\eqref{Define Masks}}.}
  \label{fig3}
\end{figure}

\subsection{Noiseless setting}

We choose $N=40$, $l_1 = 2$ and $l_2=3$. The masks $\{D_0,D_1,D_2,D_3,D_4\}$ defined in \eqref{Define Masks} are used to obtain phaseless low frequency measurements. Using parser YALMIP and solver SeDuMi, we simulate $20$ trials for various choices of $K$ and $\Delta$. We first generate the indices of the support of the signal so that the minimum separation condition is satisfied. Signal values in the support are drawn from a standard normal distribution independently. The probability of successful reconstruction of the signal by the semidefinite program (\ref{main}) as a function of $K$ and $\Delta$ is depicted in Fig. \ref{fig3}. The white region corresponds to a success probability of $1$ and the black region corresponds to a success probability of $0$. The plot shows that \eqref{main} successfully reconstructs  signals with high probability when $K\geq \frac{N}{\Delta}$. 

\begin{figure}[h]
\includegraphics[width=8cm]{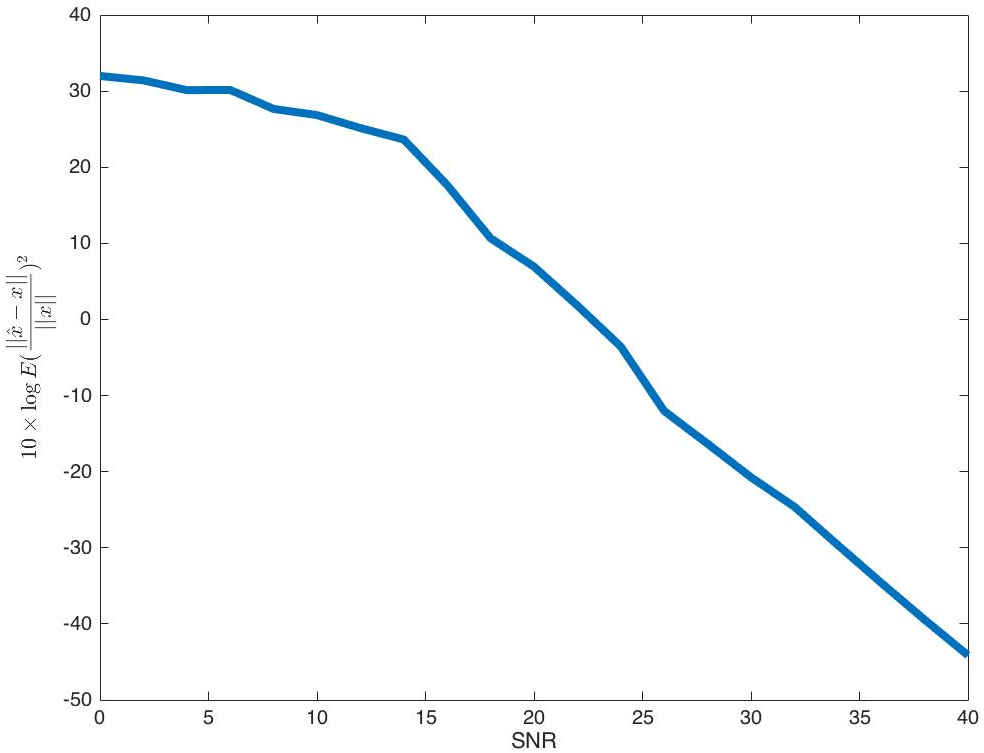}
  \caption{Mean-squared error (MSE) as a function of SNR for $N=40$, $l_1=2$, $l_2=3$, $K=14$ and $\Delta=8$.}
  \label{fig4}
\end{figure}

\subsection{Noisy setting}

A major advantage of semidefinite programming-based reconstruction is robustness to noise. In this part, we demonstrate the performance of (\ref{main}) in the noisy setting. 

For each $Z[m,r]$, we add an i.i.d. standard normal noise with appropriate variance. We first solve the program \eqref{main} by replacing the equality constraints with appropriate inequality constraints, and obtain the optimizer $\hat X$. Then, we find its best rank-one approximation, say $\hat\x\hat\x^\star$. The estimate $\hat\x$ is then compared with the true solution $\x$.

We set $N=40$, $l_1=2$, $l_2=3$, $K=14$ and $\Delta=8$. By varying the SNR, we simulate $20$ trials and compute the mean-square error $\mathbb E [\frac{\|\hat x-\mathrm x\|_2^2}{\|\mathrm x\|_2^2}]$. The results are depicted in Fig. \ref{fig4}. 

In the logarithmic scale, we see a linear relationship between the mean-squared error and SNR. This clearly shows that the reconstruction is stable in the noisy setting.


\bibliographystyle{IEEEbib}
\bibliography{refs,strings}

\section{Proof of Corollary \ref{4.2}}
\begin{proof}

The proof is based on the method of dual certificates. Let's define matrix $W\in \mathbb{C}^{n\times n}$ as follows:

$$W=\sum_{i,j: (v_i,v_j)\in E}W_{ij}$$

We define $W_{ij}$ for $0\leq i,j\leq n-1$ as follows:

$$W_{ij}=w_{ij}w_{ij}^{\star}, w_{ij}=\bar z[j]e_i-\bar z[i] e_j,$$

Where $e_i\in \mathbb C^n$ is a standard basis vector that has $1$ in $i-{th}$ entry and $0$ everywhere else. We will show that $W$ has the following properties:

\begin{enumerate}
\item $W \succeq 0,$
\item ${\text{trace}}(WZ)=0,$
\item ${\text{rank}}(W)= n-1.$
\end{enumerate}

$W$ is a positive semidefinite matrix because it is the sum of $W_{ij}$ and  $W_{ij}=w_{ij}w_{ij}^{\star}\succeq 0$. In order to show properties 2 and 3, we show the following:

$$y^{\star}Wy=0 \Leftrightarrow y= \alpha z \;\;\text{for some} \;\;\alpha \in \mathbb C.$$

One can write:

$$y^{\star}Wy=\sum_{i,j: (v_i,v_j)\in E}y^{\star}W_{ij}y=\sum_{i,j: (v_i,v_j)\in E}|y[i]z[j]-y[j]z[i]|^2$$

Therefore, 

\begin{equation}
y^{\star}Wy=0 \Leftrightarrow y[i]z[j]-y[j]z[i]=0,\;\;\; \forall (i,j)\in E
\label{10}
\end{equation}

If G is connected and the entries of $z$ are non-zero, \eqref{10} is valid {\textit{if and only if}} $y= \alpha z$ for some $\alpha \in \mathbb C$. This shows that rank($W$)\;=$\;n-1$. Also,

$$\text{trace}(WZ)=\text{trace}(Wzz^{\star})=z^{\star}Wz=0.$$

Next, let's use the above properties to prove Corollary \ref{4.2}. We want to show that the matrix $Z$ is the unique solution of  

\begin{equation}
\label{eq}
\begin{aligned}
&\underset{X\in \mathbb{S}^n}{\text{find}}&&X\\
&{\text{subject to}}&&{\text{trace}}(A_iX)=|z[i]|^2,\;\text{for}\;\; i=0,1,\ldots,n-1\\
&&&{\text{trace}}(A_eX)=\bar z[j]z[i],\;\text{for}\;\; e=(v_i,v_j)\in E\\
&&&X\succeq 0
\end{aligned}
\end{equation}

The dual of this optimization problem is 

\begin{equation}
\begin{aligned}
&\underset{\lambda \in \mathbb C^n,\mu \in  \mathbb C^{|E|}}{\text{maximize}}&&-\sum_{i=0}^{n-1}\lambda_i|z[i]|^2-\sum_{i,j:(v_i,v_j)\in E} (\mu_{i,j}\bar z[j]z[i]+\bar \mu_{i,j}\bar z[i]z[j])\\
&{\text{subject to}}&&\sum_{i=0}^{n-1}\lambda_i A_i+\sum_{i,j:e=(v_i,v_j)\in E} (\mu_{i,j}A_e+\bar \mu_{i,j} {A_e}^{\star})\succeq 0\\
\end{aligned}
\end{equation}

For $0\leq i \leq n-1$ define $N(i)$ as the set of neighbors of node $v_i$ in $G$. If we choose $\lambda_i^{*}= \sum_{j: j\in N(i)}|z[j]|^2$ and $\mu_{ij}^{*}=\bar z[j]z[i]$, then we have:

$$W=\sum_{i=0}^{n-1}\lambda_i^{*} A_i+\sum_{i,j:e=(v_i,v_j)\in E} (\mu_{i,j}^{*}A_e+\bar \mu_{i,j}^{*} {A_e}^{\star})$$

Property 1 of the matrix $W$, ensures that $W\succeq 0$ which is the dual feasibility. Property 2 is the complimentary slackness. These two properties prove that $Z=zz^{\star}$ is an optimal solution for \eqref{eq}. 

Now suppose there is another solution, namely $Z+H$, Where $H\in \mathbb S^{n}$ is an $n\times n$ Hermitian matrix. Let $T_z$ denote the set of Hermitian matrices of the form

$$T_z=\{zh^{\star}+hz^{\star}: h\in \mathbb C^n\},$$

and $T_z^{\perp}$ be its orthogonal complement. In other words, $T_z$ is the tangent space at $zz^{\star}$ to the manifold of Hermitian matrices of rank one. $H$ can be decomposed as two parts $H_{T_z}$ and $H_{T_z^{\perp}}$, which are the projections of $H$ onto the subspaces $T_z$ and $T_z^{\perp}$, respectively. In order to be an optimal solution $H$ should satisfy

\begin{equation}
{\text{trace}}(WH)={\text{trace}}(WH_{T_z})+{\text{trace}}(WH_{T_z^{\perp}})=0.
\end{equation}

Property 2 ensures that ${\text{trace}}(WH_{T_z})=0$, therefore ${\text{trace}}(WH_{T_z^{\perp}})=0$. Since $H$ is positive semidefinite its projection onto $T_z^{\perp}$ is also positive semidefinite. $H_{T_z^{\perp}}\succeq 0$ together with properties 2 and 3 lead to

$${\text{trace}}(WH_{T_z^{\perp}})=0\Leftrightarrow H_{T_z^{\perp}}=0.$$

Therefore, it remains to show that $H_{T_z}=0$. In order to be a feasible point, $H_{T_z}$ must satisfy the following conditions:

\begin{equation}
\label{eqH}
\begin{aligned}
&&&{\text{trace}}(A_iH_{T_z})=0,\;\text{for}\;\; i=0,1,\ldots,n-1\\
&&&{\text{trace}}(A_eH_{T_z})=0,\;\text{for}\;\; e=(v_i,v_j)\in E.
\end{aligned}
\end{equation}

It is easy to check that the only matrix in $T_z$ which satisfies the above conditions is 0. Therefore, $H=0$ and $Z=zz^{\star}$ is the unique solution of \eqref{eq}. 

\end{proof}

\end{document}